\newcommand{\keywords}[1]{\par\addvspace\baselineskip
\noindent\keywordname\enspace\ignorespaces#1}
\def\code#1{\texttt{#1}}
\begin{document}

\mainmatter
\title{Compositional Conformance Checking of\\Nested Petri Nets and Event Logs of Multi-Agent Systems\thanks{This work is supported by the Basic Research Program at the National Research University Higher School of Economics.}}
\titlerunning{Compositional Conformance Checking for Nested Petri Nets and Event Logs}
\author{Khalil Mecheraoui\inst{1,2}, Julio C. Carrasquel\inst{1}, and Irina A. Lomazova\inst{1}}
\institute{$^{1}\;$National Research University Higher School of Economics,\\
Myasnitskaya ul. 20, 101000 Moscow, Russia\\ \mailsa , \mailsb , \mailsc \\
$^{2}\;$University of Constantine 2 --- Abdelhamid Mehri,\\Nouvelle ville Ali Mendjeli BP : 67A, 25000 Constantine, Algeria}
\maketitle
\vspace{-15pt}

\begin{abstract}
This paper presents a compositional conformance checking approach between nested Petri nets and event logs of multi-agent systems. By projecting an event log onto model components, one can perform conformance checking between each projected log and the corresponding component. We formally demonstrate the validity of our approach proving that, to check fitness of a nested Petri net is equivalent to check fitness of each of its components. Leveraging the multi-agent system structure of nested Petri nets, this approach may provide specific conformance diagnostics for each system component as well as to avoid to compute artificial boundaries when decomposing a model for conformance checking.
\keywords{Process mining, conformance checking, Petri nets, nested Petri nets, multi-agent systems, fitness}.
\end{abstract}

\vspace{-20pt}	
\section{Introduction}

Lift, thrust, drag, and gravity are the four forces helping an airplane fly. Process mining has similarly four forces to measure its quality namely fitness, generalization, precision, and simplicity \cite{vanDerAalst2016}. Conformance checking, which is one of the three pillars of process mining, is actually the fitness force \cite{Carmona2018}. It allows to check how well modeled behavior conforms reality as recorded in an event log. Conformance checking has become relevant in areas such as business alignment \cite{vanDerAalst2005}, auditing \cite{vanDerAalst2010}, and financial software testing \cite{Itkin2019}. However, current conformance checking approaches fall short when analyzing large event logs of complex multi-agent systems. These systems are characterized by a large number of agents interacting, and exhibiting a high degree of concurrency. In this light, it makes sense to use compositional approaches, where a conformance problem can be decomposed into smaller problems (e.g. \cite{vanDerAalst2013,MunozGama2014,Lee2018}). In \cite{vanDerAalst2013}, the author formalizes the so-called \textit{valid decomposition} to decompose conformance problems. This decomposition approach represents no problem from a conformance point of view. In \cite{MunozGama2014}, the authors proposed to decompose models using the idea of \emph{single-entry} and \emph{single-exit} (SESE). A SESE component is a subnet that has a simple interface w.r.t the rest of the net. Another approach is presented in \cite{Lee2018} to compute the overall conformance of a model by merging previously decomposed fragments.

Nevertheless, these approaches use process models whose structure may not provide a clear distinction between system components and their boundaries. This leads these approaches to increase their complexity to compute such boundaries (e.g., where to decompose a model, how many components, etc). Moreover, it may happen that the decomposition is \emph{artificial}, i.e., fragments of a decomposed model do not represent a real division of a system, so diagnostics for each real component may not be provided. In this sense, we propose the use of models of multi-agent systems. In particular, we consider nested Petri nets (NP-nets) \cite{Lomazova2000} --- an extension of Petri nets, where tokens can be Petri nets themselves, allowing to model multi-agent systems. NP-nets have been already used in the broader context of process modeling and workflow management \cite{vanHee2007,Lomazova2008}.

Fig. \ref{fig:model-npn} depicts an example of a NP-net describing an \emph{automated assistant engine} that can serve multiple customers concurrently. A NP-net consists of a \emph{system net}, i.e., modeling the system's environment, and a set of \emph{net tokens}, denoting interacting agents. Each net token has an inner Petri net structure describing agent behavior.

\vspace{-5pt}
\begin{figure}[H]
\centering
\includegraphics[width=9.20cm, height=7.13cm]{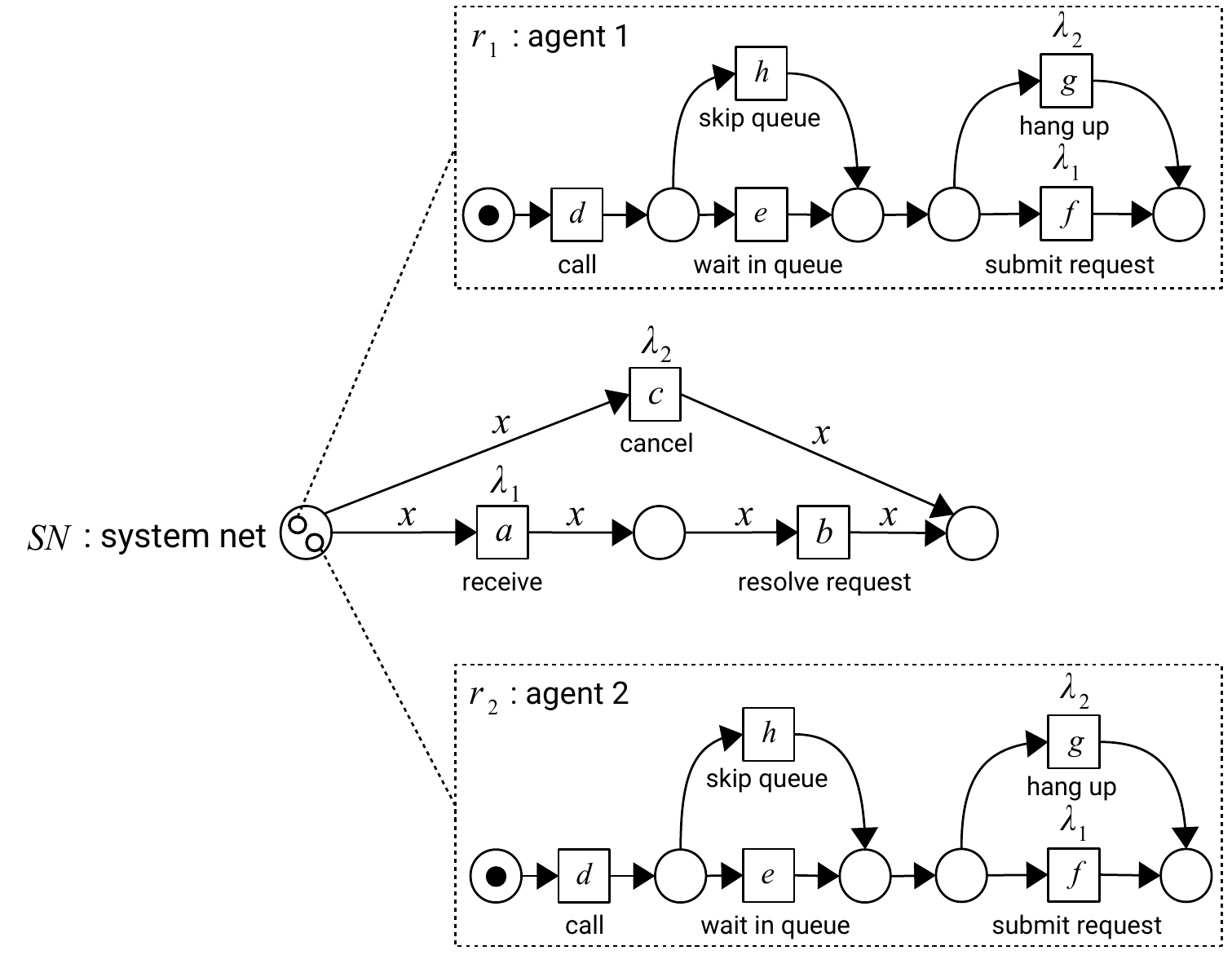}
\caption{A nested Petri net where the \emph{system net} $SN$ models an \emph{automated assistant engine}, serving customers concurrently (in this case, \emph{agents} $r_1$ and $r_2$).}
\label{fig:model-npn}
\end{figure}
\vspace{-15pt}

In this paper, we present a compositional conformance checking approach between nested Petri nets and event logs of multi-agent systems. Given an event log of a multi-agent system, we decompose it into several \textit{projections} according to the model components. Then, a conformance checking technique (e.g., replay, alignment) can be performed separately between each projection and the corresponding model component  (an agent or the system net). We assume that each agent in the event log corresponds to a net token in the nested Petri net model. For this task, we provide clear definitions regarding a subclass of nested Petri nets and event logs of multi-agent systems. To demonstrate the validity of our approach, we consider the notion of \emph{fitness}. If a model has \textit{perfect} fitness, then all log traces can be replayed on the model from beginning to end. In this work, we map such notion of an event log \textit{perfectly fits} a model, by defining how an event log of a multi-agent system fits a nested Petri net. Consequently, as an important result of this paper, we state and prove the following theorem: an event log of a multi-agent system \emph{perfectly fits} a nested Petri net if and only if the event log is syntactically correct w.r.t to the nested Petri net and each projection perfectly fits the corresponding model component. This theorem justifies the validity of our compositional approach.

The remainder of this paper is structured as follows. In section 2, we describe nested Petri nets. In section 3, we define the structure for event logs of multi-agent systems. In section 4, we present the compositional conformance checking approach of nested Petri nets and event logs of multi-agent systems, including the aforementioned theorem and its proof. Finally, section 5 presents some conclusions and future work.
\vspace{-5pt}

\section{Nested Petri Nets}

$\mathbb{N}$ denotes the set of natural numbers (including zero). Let $S$ be a set. The set of all subsets of $S$ is called a \emph{power set}, denoted as $\mathcal{P}(S)$, e.g., the power set of $S = \{a,b\}$ is $ \mathcal{P}(S) = \{ \{a,b\}, \{a\}, \{b\}, \emptyset \}$. A $multiset$ over $S$ is a mapping $m: S \to \mathbb{N}$. In other words, a multiset is a collection of elements, each of them with certain multiplicity, e.g., $\{b, a, b\},\;\{a, a, b\},$ and $\emptyset$ are multisets over $S$. For compactness, we write $\{{a}^{3}, {b}^{2}$\} for $\{a, a, a, b, b\}$. $\mathbb{P}_{m}(S)$ denotes the set of all multisets over $S$.
$\sigma = \langle a_1, a_2,..., a_n \rangle \in {S}^{*}$ denotes a \emph{sequence} of length $n$ over a set $S$.

\begin{definition}[Petri net]
A Petri net is a triple $N = (P, T, F)$, where $P$ is the set of places, $T$ is the set of transitions, $P \cap T = \emptyset$, and $F \subseteq (P \times T) \cup (T \times P)$ is the set of directed arcs (flow relation).
\end{definition}
\vspace{-4pt}

\noindent Petri nets \cite{Murata1989} is a formalism for modeling and analyzing concurrent distributed systems. As defined above, a Petri net consists of $places$ and $transitions$, which correspond respectively to $conditions$ and $activities$ of a system. Places may contain tokens, representing resources, control threads, etc. A $marking$ $m:P \to \mathbb{N}$ is a function that assigns tokens to places, denoting a system's state. The \emph{initial marking} is denoted as $m_{0}$, and the change into a new marking is defined by the \emph{firing rule}. Let $N = (P, T, F)$ be a Petri net, $X = P \cup T$, the sets $\prescript{\bullet}{}{x} = \{ y \in X | (y,x) \in F \}$ and $x^{\bullet} = \{ y \in X | (x,y) \in F \}$ denote the \textit{preset} and the \textit{postset} of $x \in X$. Transition $t \in T$ is \emph{enabled} in a \emph{marking} $m$ iff $\prescript{\bullet}{}{t} \subseteq m$. Then, the \emph{firing} of $t$ leads to a \textit{new marking} $m^{'} = m - \prescript{\bullet}{}{t} + t^{\bullet}$.

\begin{definition}[Workflow net]
Let $N = (P, T, F)$ be a Petri net. $N$ is a workflow net (WF-net) iff P contains a source place $i$ and a sink place $o$ s.t $\prescript{\bullet}{}{i} = {o}^{\bullet} = \emptyset $, and each node in $N$is on a path from $i$ to $o$.
\end{definition}
\vspace{-4pt}

\noindent When modeling individual agents in multi-agent systems, we consider \textit{workflow nets} \cite{vanDerAalst1998}. A WF-net has an initial and a final state, represented by markings $m_0 = \{i\}$ and $m_f = \{o\}$. Let $N = (P, T, F)$ be a WF-net, we consider an \emph{activity labeling} function $\delta : T \to A$, which assigns an \emph{activity} label to each transition $t \in T$, where $A$ is a finite set of activities. We define a sequence $\sigma = \langle a_1, ..., a_n \rangle \in A^{*}$ as a $run$ of a WF-net $N$ if there exists a firing sequence $\langle t_1, ...,t_n \rangle$ that leads from the initial marking $m_0 = \{i\}$ of $N$ to its final marking $m_f = \{o\}$ s.t $\delta(t_i) = a_i$ $(1 \leq i \leq n)$. The set of all possible runs of a WF-net $N$ is denoted by $\mathcal{B}(N)$ and is called the \emph{behavior} of $N$.

For modeling complex systems, one can use colored Petri nets (CP-nets). In CP-nets, tokens are attached with values belonging to different domains (\emph{color} types). Let $\mathcal{U}$ be the set of these different domains. Then, each place in a CP-net is typed with a domain in $\mathcal{U}$ indicating the type of tokens it contains. Arcs in CP-nets are annotated with expressions from a language $Expr$ defined over $Atom = V \cup C$, where $V$ and $C$ are sets of variables and constants. $Expr$ is defined as follows: $(i)$ An $atom \in Atom$ is an expression in $Expr$, $(ii)$ if $e_1$, $e_2$ $\in$ $Expr$, then $(e_1 + e_2)$ is an expression in $Expr$.

\begin{definition}[Colored Petri net]
A colored Petri net is a tuple $CPN = (P, T, F, \textup{\code{type}}, W )$ where:
\begin{itemize}
\item $(P,T,F)$ is a Petri net;
\item $\textup{\code{type}}: P \to \mathcal{U}$ is a place-typing function, mapping each place to a type in $\mathcal{U}$;
\item $W: F \to Expr$ is an arc expression function. $\; \forall r \in F$, if $r$ is adjacent to a place $p \in P$, then the type of $W(r)$ corresponds to the type of $p$.
\end{itemize}
\end{definition}

Let $CPN = (P, T, F, \textup{\code{type}}, W)$ be a CP-net over a set of domains $\mathcal{U}$. A marking $M$ in CPN is a function that maps each place $ p\in P $ into a multiset of tokens $ M(p)\in \mathbb{P}_{m}(\textup{\code{type}}(p))$. For a CPN, we distinguish an initial marking $M_0$ and a set of final markings $\Omega$. A binding $b$ of a transition $t$ is a function that assigns a value $b(v)$ to each variable $v$ occurring in the expression of an arc adjacent to $t$. For each variable $v$, $b(v)\in\bigcup_{Q \in \mathcal{U}}Q$. A pair $(t,b)$, where $b$ is a binding of $t$, is called a \emph{binding element}. An evaluation $W(p,t)(b)$ determines token demands (multiset of tokens) on $p$ for $t$ to be enabled with the binding $b$, and the multiset of tokens that the transition $t$ removes from the place $p$ when $t$ occurs with the binding $b$. $W(t,p)(b)$ determines the multiset of tokens added to an output place $p$. A transition is enabled in a marking $M$ w.r.t a binding $b$ iff for all $ p\in P $, $ W(p,t)(b)\subseteq M(p) $. An enabled transition fires in a marking $M$ yielding a new marking $M^{'}$, such that for all places $p$, $M^{'}(p) = (M(p) \setminus W(p,t)(b)) \cup W(t,p)(b)$.

In the following we consider a subclass of nested Petri nets (NP-nets). A NP-net $NP$ consists of a colored Petri net called the \emph{system net} $SN$, and a set of WF-nets $\mathcal{N} = \{E_1,..., E_n\}$ called \emph{element nets}, which define types of net tokens. In a system net $SN$, places contain either a set of \emph{net tokens} or a multiset of \emph{atomic colored tokens}. A net token is a marked element net, whereas an atomic colored token is a data value of some domain $D \in \mathcal{D}$, where $\mathcal{D}$ is a finite set of domains. Regarding the system net, we consider a language of expressions $Expr$ defined over $Atom = V \cup C$, where: (i) $V$ is a finite set of variables, typed over the set of element nets $\mathcal{N}$ and data domains $\mathcal{D}$ (e.g., the type of $x \in V$ is $E_1$) and (ii) $C$ is a finite set of constants, typed only over the set of data domains $\mathcal{D}$. Each arc $r$ is supplied with an arc expression from $Expr$. This arc expression can be either: a sum of variables typed over $\mathcal{N}$ if \code{type}$(p)$ $\in \mathcal{P}(\mathcal{N})$  where $p$ is a place adjacent to arc $r$ containing net tokens, or an arbitrary sum of distinct variables and constants typed over $\mathcal{D}$ if \code{type}$(p) \in \mathcal{D}$ where $p$ is a place adjacent to arc $r$ containing atomic colored tokens.

\begin{definition}[Nested Petri net] Let $\mathcal{D}$ be a finite set of domains, $Lab$ --- a finite set of synchronization labels and $A$ --- a finite set of activities. A nested Petri net (NP-net) is a tuple $NP = (SN, (E_1,...,E_k) ,\lambda, \delta)$, where:
\begin{itemize}
\vspace{-4pt}
\item $SN = (P_{SN}, T_{SN}, F_{SN}, \textup{\code{type}}, W)$ is a colored Petri net (called a system net) with two sets of places $P_{Net}$ and $P_{Atom}$ ($P_{Net} \cup P_{Atom} = P_{SN}$), such that for all $ p\in P_{Net}$, $\textup{\code{type}}(p)\subseteq \{E_1,...,E_k\}$ and for all $ p\in P_{Atom}$, $\textup{\code{type}}(p)\in \mathcal{D}$;
\vspace{1pt}
\item $for$ $i = \overline{1,k},$ $E_i = (P_i, T_i, F_i)$ is a WF-net, called an element net, s.t. $(T_{SN} \cup T_{i}) \cap (P_{SN} \cup P_{i}) = \emptyset$;
\vspace{1pt}
\item $\lambda: T_{NP} \nrightarrow Lab$ is a (partial) synchronization labeling function, where $T_{NP} = T_{SN} \cup T_{1} \cup ... \cup\; T_{k};$
\vspace{1pt}
\item $\delta : T_{NP} \to A$ is an activity labeling function, s.t $for$ $i = \overline{1,k}$ $T_{SN}\cap T_{i} = \emptyset$.
\end{itemize}
\end{definition}

In what follows, we consider \emph{conservative} NP-nets \cite{Lomazova2016}. In a conservative NP-net $NP = (SN, (E_1,...,E_k) ,\lambda, \delta)$, net tokens cannot be \emph{cloned} or \emph{disappear}. In a run, there is a \emph{stable set} of net tokens which we distinguish using individual \textit{agent names}. Let $R$ be a set of agent names, we propose a function $\code{class}: R \to \{E_1,...,E_k\}$, which maps to each agent name $r$ an element net. We denote by $(r, m)$ a net token which is characterized by an individual agent name $ r $ with the corresponding element net $ \code{class}(r) $ and a marking $ m $. The set of all possible net tokens is denoted by $S_{\scaleto{\code{agent}}{6.7pt}}$.

A marking $M$ in a NP-net $NP$ is a function mapping each place $p \in P_{SN}$ to a subset of $S_{\scaleto{\code{agent}}{6.7pt}}$ or a multiset over a domain $D \in \mathcal{D}$, in accordance with the type of $p$. Hence, elements in $M(p)$ are either distinguishable net tokens or atomic colored tokens which can be repeated. We say that a net token $(r, m)$ \emph{resides} in a place $p$ (under marking $M$) if $(r,m) \in M(p)$. Thus, the marking of a NP-net is defined by the marking of its system net. For a NP-net, we distinguish an initial marking $M_0$ and a set of final markings $M_f\in\Omega$.

Let $t$ be a transition in the system net $SN$ of a NP-net, and let be $\prescript{\bullet}{}{t} = \{p_{1},...,p_{i}\}$ and ${t}^{\bullet} = \{q_{1},...,q_{j}\}$ the sets of pre- and post-elements of transition $t$. $W(t) = \{W(p_1,t),...,W(p_i,t),W(t,q_1),W(t,q_j) \}$ denotes the set of all arc expressions adjacent to $t$. A \emph{binding} of $t$ is a function $b$ assigning to each variable $v$, occurring in each expression in $W(t)$, a value $b(v) \in$ $ \bigcup_{D \in \mathcal{D}} D\;\cup\; S_{\scaleto{\code{agent}}{6.7pt}}$. A transition $t$ in $SN$ is \emph{enabled} in a marking $M$ w.r.t a binding $b$ if for all $p \in \prescript{\bullet}{}{t} \; W(p,t)(b) \subseteq M(p)$. An enabled transition fires in a marking $M$ yielding a new marking $M^{'}$, such that for all places $p \in P_{SN}$, $M^{'}(p) = (M(p) \setminus W(p,t)(b)) \cup W(t,p)(b)$. For net tokens from $S_{\scaleto{\code{agent}}{6.7pt}}$ serving as variable values in input arc expressions from $W(t)$, we say that they are \emph{involved} in the firing of $t$. They are removed from input places and brought to output places of $t$.

\vspace{1.5pt}
\noindent We consider three kinds of steps in a NP-net:\\
\vspace{-5pt}

\noindent \textit{Element-autonomous step}: let $t$ be a transition without a synchronization label in a net token named $ r $, i.e., $\lambda(t)$ is not defined. When $ t $ is enabled in a marking $ m $, an \emph{element-autonomous step} is a firing of $t$ in marking $ m $, producing a new marking $ m' $, according to the usual firing rules of WF-nets.  This is also written as:\vspace{-2pt}
\[ m\xrightarrow{\delta(t), r} m' \]

\noindent \textit{System-autonomous step}: let $t \in T_{SN}$ be a transition without a synchronization label in the system net $SN$. A \emph{system-autonomous step} (also called a \textit{transfer step} when net tokens are involved) is the firing of transition $t$ according to the firing rule described above for a NP-net. The occurrence of this step in a marking $ M $ w.r.t a binding $b$, producing a new marking $ M' $, is denoted by:\vspace{-2pt}
\[M \xrightarrow{\delta(t), b} M' \]

\noindent \textit{Synchronization step}: let $t \in T_{SN}$ be a transition with a synchronization label $\lambda(t)$, and enabled in a marking $M$ w.r.t a binding $b$, and let $(r_{1},m_{1}),...,(r_{n},m_{n}) \in S_{\scaleto{\code{agent}}{6.7pt}}$ be net tokens involved in the firing of $t$. Then, $t$ can fire provided that in each $(r_{i},m_{i})$ ($1 \leq i \leq n$) there is an enabled transition labeled with the same value $\lambda(t)$. Thus, a \textit{synchronization step} goes in two stages: first, the firing of transitions $t_1,...,t_n$ in all net tokens involved in the firing of $t$, and then, the firing of $t$ in $SN$ w.r.t. a binding $b$. This step is denoted by: \vspace{-2pt}
\[ M\xrightarrow{\delta(t),\lbrace (\delta(t_{i}), r_{i}), 1\leq i\leq n\rbrace, b} M'\]
 
\begin{definition}[Run, Behavior of a nested Petri net]
\label{Run-Behavior}
Let $NP$ be a conservative nested Petri net and $\sigma$ --- a sequence of steps in $NP$. The occurrence of $\sigma$ from the initial marking $M_{0}$ of $ NP $, results in some final marking $ M_{f}\in\Omega $, is called a \textit{run}. The set of all possible runs is denoted by $\mathcal{B}(NP)$ and is called the behavior of $NP$.
\end{definition}

\section{Event Logs of Multi-Agent Systems}

An \emph{event log} of a multi-agent system is a multiset of traces, where a \emph{trace} is a sequence of events. Events consist of an \emph{activity} name, \emph{resources} which executed the activity or were \emph{involved} in its execution, and an (optional) multiset of data values. As possible resources we consider a system $SN$ or a finite set of agents with distinct names $r_1,r_2,...,r_n$. As shown in table \ref{tab:table1}, we consider three event types: (1) execution of an activity $a_1$ by some resource $r_1$, (2) execution of an activity $a$ by the system $SN$ where $n$ resources (agents) are involved, or (3) the simultaneous execution of activity $a$ by $SN$, and activities $a_{1},...,a_{n}$ by resources $r_{1},...,r_{n}$. For cases (2) and (3), events may contain $m$ \emph{data values} used by $SN$. We proceed to formally define a trace and an event log of a multi-agent system.

\begin{table}
\vspace{-20pt}
\caption{Event attributes in event logs of multi-agent systems.}
\label{tab:table1}
\centering
\begin{tabular}{|l|c|c|c|}
\hline
\multicolumn{1}{|c|}{\textbf{Event type}} & \multicolumn{3}{|c|}{\textbf{Attributes}}\\
\hline
\textit{event refers to...}& \textbf{Activity} & \textbf{Resource} & \textbf{Data} \\
\hline
(1) an \emph{element-autonomous} step & $a_1$ & $r_1$ & none. \\
\hline
(2) a \emph{system-autonomous} step & $a$ & $(SN, \{r_1,...,r_n\})$ & $\{d_1,...,d_m\}$ \\
\hline
(3) a \emph{synchronization} step & $(a, a_1,...,a_n)$ & $(SN, r_1,...,r_n )$ & $\{d_1,...,d_m\}$ \\
\hline
\end{tabular}
\vspace{-10pt}
\end{table}

\begin{definition}[Trace, Event log of a multi-agent system]
Let $ SN $ be a system name, $S$ --- a set of system activities, $ D $ --- a set of data, $B$ --- a set of agent activities, $R_{B}$ --- a set agent names, and $ \mathcal{C} $ --- a set where $C\in \mathcal{C}\Leftrightarrow C = C_{1}\cup C_{2}\, / \, C_{1}\subseteq R_{B}$ and $C_{2}\in \mathbb{P}_{m}(D)$. A \textit{trace} is a sequence $\sigma\in( \mathcal{E}_{{\scaleto{\code{agent}}{6.7pt}}}\cup\mathcal{E}_{{\scaleto{\code{system}}{6.7pt}}}\cup\mathcal{E}_{{\scaleto{\code{sync}}{5.4pt}}})^{*} $ where $ \mathcal{E}_{{\scaleto{\code{agent}}{6.7pt}}}= B\times R_{B}$, $ \mathcal{E}_{{\scaleto{\code{system}}{6.7pt}}}=  (S\times\lbrace SN\rbrace \times \mathcal{C})$, and $ \mathcal{E}_{{\scaleto{\code{sync}}{5.4pt}}}= (S\times\lbrace SN\rbrace \times \mathcal{P}(B\times R_{B})\times \mathbb{P}_{m}(D))$. $\mathcal{L}\in\mathbb{P}_{m}(\mathcal{E}_{{\scaleto{\code{agent}}{6.7pt}}},\mathcal{E}_{{\scaleto{\code{system}}{6.7pt}}},\mathcal{E}_{{\scaleto{\code{sync}}{5.4pt}}})^{*}$ is an event log, i.e., a multiset of traces.
\end{definition}

\begin{table}[h]
\vspace{-30pt}
  \begin{center}
    \caption{An event log of a multi-agent system $\mathcal{L}$ in tabular form, whose expected behavior is modeled in fig. \ref{fig:model-npn}.}
    \label{tab:table2}
     \begin{tabular}{|p{0.84\linewidth}|c|}
      \hline
      \textbf{Trace} & \textbf{Freq.} \\
      \hline
      $\sigma_1 = $ $\langle \; (d,r_{1}),(d,r_{2}), (h,r_{1}),(h,r_{2}), (a,SN,\{(f,r_{1})\}),(a,SN,\{(f,r_{2})\}),$\newline$(b,SN,\lbrace r_{2}\rbrace), (b,SN,\lbrace r_{1}\rbrace)\; \rangle$ & 4\\
      \hline
      $\sigma_2 = $ $ \langle \; (d,r_{2}),(h,r_{2}),(d,r_{1}), (h,r_{1})(a,SN,\{(f,r_{2})\}),(a,SN,\{(f,r_{1})\}),$\newline$(b,SN,\lbrace r_{1}\rbrace), (b,SN,\lbrace r_{2}\rbrace)\; \rangle $ & 1\\
      \hline
       $\sigma_3 = $ $ \langle \; (d,r_{2}),(e,r_{2}),(d,r_{1}),(h,r_{1}),((c,SN,\{(g,r_{1})\}),(c,SN,\{(g,r_{2})\})\; \rangle $ & 1\\
      \hline
       $\sigma_4 = $ $ \langle \; (d,r_{1}),(d,r_{2}),(h,r_{2}),(e,r_{1}),(c,SN,\{(g,r_{2})\}),(c,SN,\{(g,r_{1})\})\;  \rangle$ & 1\\
      \hline
       $\sigma_5 = $ $\langle \; (d,r_{1}),(d,r_{2}),(e,r_{1}),(e,r_{2}),(a,SN,\{(f,r_{1})\}),(c,SN,\{(g,r_{2})\}),$\newline$(b,SN,\lbrace r_{1}\rbrace)\; \rangle$ & 2\\
      \hline
    \end{tabular}
  \end{center}
\vspace{-20pt}
\end{table}

Table \ref{tab:table2} shows an event log $\mathcal{L}$ of the multi-agent system modeled in fig. \ref{fig:model-npn}. $\mathcal{L}$ contains information on nine traces. A distinct trace can occur multiple times. For instance, trace $\sigma_5$ occurred two times. It is a sequence of seven events. First, both activities $d$ and $e$ were executed by agents $r_1$ and $r_2$. Next, the system $SN$ executed two synchronization steps with agents $r_1$ and $r_2$, where activities $a$ and $f$, and later $c$ and $g$, were executed simultaneously. The trace ended by a system-autonomous step where $SN$ executed $b$ for agent $r_{1}$.

Let $\mathcal{L}$ be an event log of a multi-agent system and $NP$ --- a nested Petri net. $\mathcal{L}$ is \emph{syntactically correct} w.r.t. $NP$ if each event in $\mathcal{L}$ is \emph{syntactically correct} w.r.t. a step in $NP$ where:
\vspace{-2pt}

\begin{itemize}
\item An event $ (a,r) $ is syntactically correct w.r.t. a step in $NP$ if there is a transition $ t $ without synchronization label in a net token named $ r $ where $ \delta(t)=a $, and $ t $ can fire in a marking $ m $ producing a new marking $ m' $, i.e., $ m\xrightarrow{\delta(t), r} m'$ in $ NP $ where $ t\in E_{i}, 1\leq i\leq k,$ and $ \code{class}(E_{i})=r$.

\item An event $ (a,SN,{d_{1},...,d_{m}}) $ is syntactically correct w.r.t. a step in $NP$ if there is a transition $ t\in T_{SN} $ without synchronization label where $ \delta(t)=a $, and $ t $ can fire in a marking $ M $ w.r.t. a binding $ b $ such that $ b $ assigns the atomic tokens $ d_{1},...,d_{m} $ to the variables in $ W(t) $, i.e., $ M\xrightarrow{\delta(t), b} M'$ in $ NP $.

\item An event $ (a,SN,\{(a_{1},r_{1}),...,(a_{q},r_{q})\},\lbrace d_{1},...,d_{p}\rbrace) $ is syntactically correct w.r.t. a step in $NP$ if there is a transition $ t\in T_{SN} $ with synchronization label $ \lambda(t) $ where $ \delta(t)=a $, $ (r_{1},m_{1}),...,(r_{n},m_{n}) $ are net tokens involved in the firing of $ t $ such that in each $ (r_{i},m_{i}) $  there is an enabled transition $ t_{i}$ ($\delta(t_{i})=a_{i}$)  labeled with the same value $ \lambda(t) $, and $ t $ can fire in a marking $ M $ w.r.t. a binding $ b $ assigning the atomic tokens $d_{1},...,d_{p} $ and also the agent names $ r_{1},...,r_{q} $ to the variables in $ W(t) $, i.e., $ M\xrightarrow{\delta(t),\lbrace (\delta(t_{i}), r_{i}), 1\leq i\leq n\rbrace, b} M' $ in $ NP $.\vspace{-2pt}
\end{itemize}

\section{Compositional Conformance Checking of Nested Petri Nets and Event Logs of Multi-Agent Systems}

In this section, we propose a solution to check conformance of event logs of multi-agent systems and nested Petri nets. We prove that an event log perfectly fits a NP-net iff the event log is syntactically correct w.r.t. that NP-net, and each projection of the event log onto a NP-net component perfectly fits that component.

\noindent Let $\langle\rangle$ denote an empty sequence, $\sigma_{1}.\sigma_{2}$ --- concatenation of two sequences, and $\sigma {\upharpoonright}_{e}$ --- the $projection$ of sequence $\sigma$ on an element $e$.

\emergencystretch 5em%
\begin{definition}[Trace projection onto an agent]
\label{Projection1} 
 Let $X = (\mathcal{E}_{{\scaleto{\code{agent}}{6.7pt}}}\cup\mathcal{E}_{{\scaleto{\code{system}}{6.7pt}}}\cup\mathcal{E}_{{\scaleto{\code{sync}}{5.4pt}}})$ be a set of events, $B$ --- a set of agent activities, and $R_{B}$ --- a set of agent names. ${\upharpoonright}_{r\in R_{B}}\in X^{*}\longrightarrow B^{*}$ is a projection function defined recursively: $ (1) \langle\rangle{\upharpoonright}_{r\in R_{B}}=\langle\rangle$ $ (2) $ For $ \sigma\in X^{*}$ and $ e\in X $:

\begin{itemize}[nosep]
\item  If $e = (x,y)\in \mathcal{E}_{{\scaleto{\code{agent}}{6.7pt}}},y=r$ then $(\langle e \rangle.\sigma){\upharpoonright}_{r\in R_{B}}=\langle x \rangle.\sigma{\upharpoonright}_{r\in R_{B}} $;
 \item   If $e=(x,SN,\{(x_{1},y_{1}),...,(x_{n},y_{n})\},\lbrace d_{1},...,d_{p}\rbrace)\in \mathcal{E}_{{\scaleto{\code{sync}}{5.4pt}}}, y_{i}=r, 1\leq i\leq n$ then $(\langle e \rangle.\sigma){\upharpoonright}_{r\in R_{B}}=\langle x_{i} \rangle.\sigma{\upharpoonright}_{r\in R_{B}} $;
 \item  Otherwise $ (\langle e \rangle.\sigma){\upharpoonright}_{r\in R_{B}}=  \sigma{\upharpoonright}_{r\in R_{B}} \quad\quad\quad \quad$.
\end{itemize}
\end{definition}

\emergencystretch 5em%
\begin{definition}[Trace projection onto a system net]
\label{Projection2} 
 Let $X = (\mathcal{E}_{{\scaleto{\code{agent}}{6.7pt}}}\cup\mathcal{E}_{{\scaleto{\code{system}}{6.7pt}}}\cup\mathcal{E}_{{\scaleto{\code{sync}}{5.4pt}}})$ and $Y\subseteq S\times \mathcal{C} $ where $ S $ is a set of system activities, $C\in \mathcal{C}\Leftrightarrow C = C_{1}\cup C_{2} / C_{1}\subseteq R_{B}$ and  $C_{2}\in \mathbb{P}_{m}(D)$, $ R_{B} $ is a set of agent names, and $ D $ is a set of data. ${\upharpoonright}_{SN}\in X^{*}\longrightarrow Y^{*}$ is a projection function defined recursively:  (1) $\langle\rangle{\upharpoonright}_{SN}=\langle\rangle$ (2) For $ \sigma\in X^{*}$ and $ e\in X $:

\begin{itemize}[nosep]
 \item   If $e =(x,SN,\lbrace d_{1},...,d_{p}\rbrace)\in \mathcal{E}_{{\scaleto{\code{system}}{6.7pt}}}$ then $(\langle e \rangle.\sigma){\upharpoonright}_{SN}=\langle(x,\lbrace d_{1},...,d_{p}\rbrace) \rangle.\sigma{\upharpoonright}_{SN} $;
 \item   If $e=(x,SN,\{(x_{1},y_{1}),...,(x_{n},y_{n})\},\lbrace d_{1},...,d_{p}\rbrace)\in \mathcal{E}_{{\scaleto{\code{sync}}{5.4pt}}}$ then $(\langle e \rangle.\sigma){\upharpoonright}_{SN}=\langle (x,\lbrace y_{1},...,y_{n},d_{1},...,d_{p}\rbrace) \rangle.\sigma{\upharpoonright}_{SN} $;
 \item  Otherwise $ (\langle e \rangle.\sigma){\upharpoonright}_{SN}=  \sigma{\upharpoonright}_{SN} \quad\quad\quad \quad$.
\end{itemize}

\end{definition}

\noindent Def. \ref{Projection1} (resp. Def. \ref{Projection2}) is used to project traces of an event log onto net tokens (resp. a system net). A projection of a trace onto a net token yields the sequence of agent activities. A projection onto the system net yields a sequence of pairs where each pair consists of an activity and the set of resources and data involved in this activity execution. For instance, consider the projection of the event log $\mathcal{L}$ (cf. Table \ref{tab:table2}) onto components of the NP-net $N$ (cf. Fig. \ref{fig:model-npn}). Table \ref{tab:table3} show the three decomposed event logs $L_{SN}$, $L_{1}$, and $L_{2}$ resulting from the projection of $\mathcal{L}$ onto (a) the system net, (b) agent $r_1$, and (c) agent $r_2$ respectively.

\begin{table}
\vspace{-15pt}
\caption{Projections $L_{SN}$, $L_{1}$, and $L_{2}$ from the event log $\mathcal{L}$ (cf. Table \ref{tab:table2}) onto (a) the system net $SN$, and agents (b) $r_1$ and (c) $r_2$ from $N$ (cf. Fig. \ref{fig:model-npn}).}
\label{tab:table3}
\vspace{3pt}
\subfloat[caption][$L_{SN}$]{\begin{tabular}{ | l | c |} 
\hline
\textbf{Trace} & \textbf{Freq.}\\ 
\hline
$\langle(a,\lbrace r_{1}\rbrace),(a,\lbrace r_{2}\rbrace),(b,\lbrace r_{2}\rbrace),(b,\lbrace r_{1}\rbrace)\rangle$ & 4\\
\hline
$\langle(a,\lbrace r_{2}\rbrace),(a,\lbrace r_{1}\rbrace),(b,\lbrace r_{1}\rbrace),(b,\lbrace r_{2}\rbrace)\rangle$ & 1\\
\hline
$\langle(c,\lbrace r_{1}\rbrace),(c,\lbrace r_{2}\rbrace)\rangle $ & 1 \\
\hline
$ \langle(c,\lbrace r_{2}\rbrace),(c,\lbrace r_{1}\rbrace)\rangle$ & 1 \\ 
\hline
$ \langle(a,\lbrace r_{1}\rbrace),(c,\lbrace r_{2}\rbrace),(b,\lbrace r_{1}\rbrace)\rangle $ & 2 \\
\hline
\end{tabular}}
\hspace{1.33em}
\subfloat[caption][$L_{1}$]{\begin{tabular}{ | c | c |} 
\hline
\textbf{Trace} & \textbf{Freq.}\\ 
\hline
$\langle d, h, f\rangle$ & 5\\
\hline
$\langle d, h, g \rangle$ & 1\\
\hline
$\langle d, e, g \rangle$ & 1\\
\hline
$\langle d, e, f \rangle$ & 2\\
\hline
\end{tabular}}
\hspace{1.33em}
\subfloat[caption][$L_{2}$]{\begin{tabular}{ | c | c | c | c |} 
\hline
\textbf{Trace} & \textbf{Freq.}\\ 
\hline
$\langle d, h, f\rangle$ & 5\\
\hline
$\langle d, e, g \rangle$ & 3\\
\hline
$\langle d, h, g\rangle$ & 1\\
\hline
\end{tabular}}
\vspace{-22pt}
\end{table}
 
\noindent Thus, a conformance checking technique can be applied to each projection and the corresponding NP-net component, ignoring their synchronization labels. In particular, for the system net, we replace net tokens by their agent names, which are atomic colored tokens. We consider synchronization steps as autonomous steps, and for a marking $M$ in a NP-net $NP$, marking projections onto $NP$ components are defined as follows: (1) the projection of $M$ onto a system net $SN$, denoted as $M{\upharpoonright}_{SN}$, is a marking of the colored Petri net $SN$ obtained by replacing all net tokens in $M$ by their agent names, and (2) the projection of $M$ onto a net token $(r,m)$, denoted as $ M{\upharpoonright}_{(r,m)}$, is just $m$. A system net component $SN$, with a marking $ M{\upharpoonright}_{SN} $ and without synchronization labels, is a CP-net labeled by activity names. A sequence of binding elements $\langle (t_{1},b_{1}),...,(t_{n},b_{n})\rangle$, starting from the initial marking $ M_{0}{\upharpoonright}_{SN} $ and ending in a final marking $ M_{f}{\upharpoonright}_{SN} $, projected onto the set of system net activities is called a $ \textit{run} $.

\begin{definition}[Perfectly fitting event log]
\label{Perf-Fits}
Let $\mathcal{L}$ be an event log of a multi-agent system and $NP$ --- a nested Petri net. $\mathcal{L}$ perfectly fits $NP$ if and only if for all $\sigma=\langle e_{1},...,e_{n}\rangle\in\mathcal{L}$ there is a run $\sigma'=\langle s_{1},...,s_{n}\rangle\in\mathcal{B}(NP)$ such that for $ i = \overline{1,n} $, $ e_{i} $ is syntactically correct w.r.t $ s_{i} $.
\end{definition}

\noindent Let $NP$ be a nested Petri net, $\mathcal{L}$ --- an event log, $ (r_{1},m_{1}),...,(r_{n},m_{n}) $ --- net tokens of $ NP $, $ L_{1},...,L_{n} $ –- corresponding projections of $ \mathcal{L} $, and $ L_{SN} $ –- a projection of $ \mathcal{L} $ onto actions of the system net. $L_{SN}$ perfectly fits $ SN $ if and only if for all $\sigma=\langle e_{1},...,e_{m}\rangle\in L_{SN}$, there is a run $ \sigma' $ in the system net component where
\[\sigma'= (M_{0} \xrightarrow{\delta(t_{1}),b_{1}} M_{1}\xrightarrow{\delta(t_{2}),b_{2}} M_{2}...M_{m-1}\xrightarrow{\delta(t_{m}),b_{m}} M_{f})\] and for $ i = \overline{1,m} $, $ e_{i}=(a_{i},\{d^{i}_{1},...,d^{i}_{p}\}) $, $ \delta(t_{i})=a_{i} $, and $ b_{i} $ is the binding assigns $\{d^{i}_{1},...,d^{i}_{p}\}$ to the variables in $ W(t_{i}) $. For $i = \overline{1,n}$, $L_{i}$ perfectly fits $(r_{i},m_{i})$ if and only if for all $\sigma=\langle e_{1},...,e_{m}\rangle\in L_{i}$, there is a run $ \sigma' $ in the element net $ \code{class}(r_{i}) $ where
\[\sigma'= (m_{0} \xrightarrow{\delta(t_{1})} m_{1}\xrightarrow{\delta(t_{2})} m_{2}...m_{m-1}\xrightarrow{\delta(t_{m})} m_{f})\] and for $ j = \overline{1,m} $, $ e_{j}=a_{j} $ and $ \delta(t_{j})=a_{j} $.

An event log perfectly fits a model if all traces in the log can be replayed on the model from beginning to end. For instance, let us consider the event log $\mathcal{L}$ (cf. Table \ref{tab:table2}) and the NP-net $N$ depicted in fig. \ref{fig:model-npn}.  Clearly, $\mathcal{L}$ perfectly fits $NP$. Also, each projected event log $L_{SN}$, $L_{1}$, or $L_{2}$ (cf. Table \ref{tab:table3}) perfectly fits the corresponding component in $NP$.

\begin{theorem}
\label{fitness}
Given a nested Petri net $NP = (SN, (E_1,...,E_k) ,\lambda, \delta)$ and an event log $\mathcal{L}\in\mathbb{P}_{m}(\mathcal{E}_{{\scaleto{\code{agent}}{6.7pt}}},\mathcal{E}_{{\scaleto{\code{system}}{6.7pt}}},\mathcal{E}_{{\scaleto{\code{sync}}{5.4pt}}})^{*}$, let $ (r_{1},m_{1}),...,(r_{n},m_{n}) $ be net tokens of $ NP $, $ L_{1},...,L_{n} $ – corresponding projections of $ \mathcal{L} $, and $ L_{SN} $ – a projection of $ \mathcal{L} $ onto the system net $ SN $. $ \mathcal{L} $ perfectly fits $ NP $ if and only if:
\begin{enumerate}
\item 	$ \mathcal{L} $ is syntactically correct w.r.t $ NP $;
\item 	$ L_{SN} $ perfectly fits $ SN $;
\item 	$ L_{i} $ perfectly fits $(r_{i},m_{i})$, $ 1\leqslant i\leqslant n $. 
\end{enumerate}

\end{theorem}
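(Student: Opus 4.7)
The plan is to prove both directions of the biconditional, reducing each to a per-trace argument since perfect fitness of $\mathcal{L}$ (resp. $L_{SN}$, $L_i$) is defined trace-by-trace. Throughout, the key technical device is an invariant relating markings of $NP$ to markings of its components via the projections ${\upharpoonright}_{SN}$ and ${\upharpoonright}_{(r,m)}$.

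For the forward direction, assume $\mathcal{L}$ perfectly fits $NP$ and fix any $\sigma=\langle e_1,\ldots,e_n\rangle\in\mathcal{L}$ with witnessing run $\sigma'=\langle s_1,\ldots,s_n\rangle\in\mathcal{B}(NP)$. Condition (1) is then immediate from Definition~\ref{Perf-Fits}, which already requires that each $e_i$ be syntactically correct w.r.t. $s_i$. For (2) and (3) I would establish a projection lemma by induction on the length of $\sigma'$: the projection of a run of $NP$ onto the system net is a run of the $SN$ component witnessing $\sigma{\upharpoonright}_{SN}$, and its projection onto a net token $(r_i,m_i)$ is a run of the element net $\code{class}(r_i)$ witnessing $\sigma{\upharpoonright}_{r_i}$. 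The inductive step splits on the three kinds of NP-net step: element-autonomous steps affect only one element-net marking, system-autonomous steps affect only $M{\upharpoonright}_{SN}$ (via atomic tokens and agent-name bindings), and synchronization steps contribute both to $M{\upharpoonright}_{SN}$ (viewed as an autonomous step after dropping $\lambda$) and to the internal markings of the involved net tokens.

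For the backward direction, assume (1)--(3) and fix $\sigma\in\mathcal{L}$. Hypotheses (2) and (3) supply specific runs $\pi_{SN}$ of $SN$ replaying $\sigma{\upharpoonright}_{SN}$ and $\pi_i$ of $\code{class}(r_i)$ replaying $\sigma{\upharpoonright}_{r_i}$ for each $i$. I would construct a run of $NP$ replaying $\sigma$ by induction on prefix length, maintaining the invariant that after the first $k$ events have been fired to yield a marking $M_k$, $M_k{\upharpoonright}_{SN}$ coincides with the marking reached along $\pi_{SN}$ after its first $j_{SN}(k)$ events and $M_k{\upharpoonright}_{(r_i,m_i)}$ with the marking reached along $\pi_i$ after its first $j_i(k)$ events, where $j_{SN}(k)$ and $j_i(k)$ count the relevant events in the prefix. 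For each event type, syntactic correctness (1) supplies the candidate transition and binding, and the invariant together with the next step of $\pi_{SN}$ or $\pi_i$ certifies enabledness in $M_k$. The main obstacle is the synchronization case: the NP-net firing rule demands that a system-net transition $t$ with $\lambda(t)$ and, in each involved net token, an element transition $t_i$ with $\lambda(t_i)=\lambda(t)$ be enabled \emph{simultaneously}. Hypotheses (2) and (3) reason about the component runs independently---indeed (2) sees $t$ as an autonomous transition since synchronization labels are dropped in the projection---so neither alone forces simultaneity. The argument closes by using (1) to pin down the matching synchronization labels and the compatible binding across components, and the marking invariant to certify that all required tokens sit in the right places at exactly the same point in $\sigma$, so the synchronization rule applies and the induction proceeds.
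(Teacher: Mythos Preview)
Your proposal is correct and follows essentially the same approach as the paper: the forward direction projects the witnessing run of $NP$ onto components by case analysis on the step type, and the backward direction reconstructs a run of $NP$ from the component runs, using syntactic correctness to tie the pieces together. If anything, your backward direction is more explicit than the paper's, which simply asserts that the projected sequences ``can be merged together into the trace $\sigma$'' without spelling out the marking invariant or the synchronization subtlety you flag; your induction on prefix length with the invariant on $M_k{\upharpoonright}_{SN}$ and $M_k{\upharpoonright}_{(r_i,m_i)}$ is precisely the formalization that makes that merging rigorous.
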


\begin{proof}
Let $\mathcal{L}$ be an event log of a multi-agent system, $NP = (SN, (E_1,...,E_k) ,\lambda, \delta)$ --- a nested Petri net, and $ (r_{1},m_{1}),...,(r_{n},m_{n}) $ --- net tokens of $ NP $.

($ \Rightarrow $) Let $\sigma=\langle e_{1},...,e_{m}\rangle \in \mathcal{L} $ be such that there is a run $\sigma'=\langle s_{1},...,s_{m}\rangle \in \mathcal{B}(NP) $ and for $ j = \overline{1,m} $:
\begin{itemize}
\item if $ e_{i}=(a,r)$ then $ s_{i}= m\xrightarrow{\delta(t), r} m'$ where $ t\in E_{i}, 1\leq i\leq k, \code{class}(E_{i})=r,$ and $ \delta(t)=a $. \textbf{(1)}
\item if $ e_{i}=(a,SN,{d_{1},...,d_{p}})$ then $ s_{i}=M\xrightarrow{\delta(t), b} M'$ where $ \delta(t)=a,$ and $ b $ is a binding assigning the atomic tokens $ d_{1},...,d_{p} $ to the variables in $ W(t) $. \textbf{(2)}
\item if $ e_{i}=(a,SN,\{(a_{1},r_{1}),...,(a_{q},r_{q})\},\lbrace d_{1},...,d_{p}\rbrace)$ then $ s_{i}=M\xrightarrow{\delta(t),\lbrace (\delta(t_{i}), r_{i}), 1\leq i\leq q\rbrace, b} M'$ where $ \delta(t)=a,$, $ \delta(t_{i})=a_{i}, 1\leq i\leq q$, and $ b $ is a binding assigning the atomic tokens $d_{1},...,d_{p} $ and also the agent names $ r_{1},...,r_{q} $ to the variables in $ W(t) $. \textbf{(3)}
\end{itemize}
i.e., $ \sigma $ perfectly fits $ NP $. We need to prove that
\begin{itemize}
\item $ \sigma $ is syntactically correct w.r.t $ NP $;
\item $\sigma \upharpoonright_{SN}=\langle e^{s}_{1},...,e^{s}_{m'}\rangle$ perfectly fits the system net component, i.e., there is a run $ \sigma_{SN} $ in the system net component where:
\[\sigma_{SN}= (M_{0} \xrightarrow{\delta(t_{1}),b_{1}} M_{1}\xrightarrow{\delta(t_{2}),b_{2}} M_{2}...M_{m'-1}\xrightarrow{\delta(t_{m'}),b_{m'}} M_{f})\], and for $ i = \overline{1,m'} $, $ e^{s}_{i}=(a_{i},\{d^{i}_{1},...,d^{i}_{p}\}) $, $ \delta(t_{i})=a_{i} $, and $ b_{i} $ is the binding that assigns $\{d^{i}_{1},...,d^{i}_{p}\}$ to the variables in $ W(t_{i}) $;
\item for $ i = \overline{1,n}$, $\sigma \upharpoonright_{r_{i}}=\langle a^{i}_{1},...,a^{i}_{m''}\rangle$ perfectly fits $ (r_{i},m_{i}) $, i.e., there is a run $ \sigma_{r_{i}} $ in the element net $ \code{class}(r_{i}) $ where:
\[\sigma_{r_{i}}= (m_{0} \xrightarrow{\delta(t_{1})} m_{1}\xrightarrow{\delta(t_{2})} m_{2}...m_{m-1}\xrightarrow{\delta(t_{m''})} m_{f})\] and for $ j = \overline{1,m''}, \delta(t_{j})=a^{i}_{j}$.
\end{itemize}

By the fact that $ \sigma $ perfectly fits $ NP $, it follows trivially that $ \sigma $ is syntactically correct w.r.t $ NP $ (cf. Def. \ref{Perf-Fits}).

Taking into account that $\sigma'=\langle s_{1},...,s_{m}\rangle$ is a run in $ NP $ (which can hold synchronization labels) where for $ i = \overline{1,m} $ we have (1), (2) and (3), and that 
  \[
    \langle e_{i} \rangle\upharpoonright_{SN}=\left\{
                \begin{array}{ll}
                  \langle\rangle,\qquad\qquad\qquad\qquad\qquad\:\:\: if\: e_{i}=(a,r) \\
                  \langle (a,\{d_{1},...,d_{p}\})\rangle,\qquad\qquad\: if\: e_{i}=(a,SN,{d_{1},...,d_{p}})\qquad\quad\textbf{(4)}\\
                  \langle (a,\{r_{1},...,r_{q},d_{1},...,d_{p}\})\rangle,\: if\: e_{i}=(a,SN,\{(a_{1},r_{1}),...,\\ \qquad\qquad\qquad\qquad\qquad\qquad\qquad\quad (a_{q},r_{q})\},d_{1},...,d_{p}\rbrace)
                \end{array}
              \right.
  \]
(cf. Def. \ref{Projection2}) we deduce that for $\sigma \upharpoonright_{SN}=\langle e^{s}_{1},...,e^{s}_{m'}\rangle$, there is a run $ \sigma_{SN} $ in the system net component where:
\[\sigma_{SN}= (M_{0} \xrightarrow{\delta(t_{1}),b_{1}} M_{1}\xrightarrow{\delta(t_{2}),b_{2}} M_{2}...M_{m'-1}\xrightarrow{\delta(t_{m'}),b_{m'}} M_{f})\]
and for $ i = \overline{1,m'} $, $ e^{s}_{i}=(a_{i},\{d^{i}_{1},...,d^{i}_{p}\}) $, $ \delta(t_{i})=a_{i} $, and $ b_{i} $ is the binding that assigns $\{d^{i}_{1},...,d^{i}_{p}\}$ to the variables in $ W(t_{i}) $. Therefore, $\sigma \upharpoonright_{SN}$ perfectly fits the system net component.

Now by the fact that $\sigma'$ is a run in $ NP $ (which can hold synchronization labels) where for $ i = \overline{1,m} $ we have (1), (2) and (3), and that for $ j = \overline{1,n} $
  \[
    \langle e_{i} \rangle\upharpoonright_{r_{j}}=\left\{
                \begin{array}{ll}
                  \langle a \rangle,\qquad\qquad\quad\: if\: e_{i}=(a,r_{j})\\
                  \langle a_{j}\rangle, 1\leq j\leq q,\: if\: e_{i}=(a,SN,\{(a_{1},r_{1}),...,(a_{q},r_{q})\},\qquad\quad\textbf{(5)}\\ \qquad\qquad\:\:\:\qquad\qquad\qquad\qquad\quad\lbrace d_{1},...,d_{p}\rbrace)\\
                  \langle\rangle,\qquad\qquad\quad\:\:\: otherwise
                \end{array}
              \right.
  \]
(cf. Def. \ref{Projection1}) it follows that for $\sigma \upharpoonright_{r_{i}}=\langle a_{1},...,a_{m''}\rangle, 1\leq i\leq n$, there is a run $ \sigma_{r_{i}} $ in the element net $ \code{class}(r_{i}) $ where:
\[\sigma_{r_{i}}= (m_{0} \xrightarrow{\delta(t_{1})} m_{1}\xrightarrow{\delta(t_{2})} m_{2}...m_{m-1}\xrightarrow{\delta(t_{m''})} m_{f})\] and for $ j = \overline{1,m''}, \delta(t_{j})=a_{j}$. Therefore, $\sigma \upharpoonright_{r_{i}}$ perfectly fits $ (r_{i},m_{i}) $.

($ \Leftarrow $) Let $\sigma=\langle e_{1},...,e_{m}\rangle \in \mathcal{L} $ be such that $ \sigma $ is syntactically correct w.r.t $ NP $, $\sigma \upharpoonright_{SN}$ perfectly fits the system net component, and for $ i = \overline{1,n} $, $\sigma \upharpoonright_{r_{i}}$ perfectly fits $ (r_{i},m_{i}) $. We need to prove that $ \sigma $ perfectly fits $ NP $.

Taking into account (4) and (5), and that $ \sigma $ is syntactically correct w.r.t $ NP $, we deduce that by associating to each element of the projected sequences the corresponding resource, elements can be merged together into the trace $ \sigma $. Therefore, being $\sigma \upharpoonright_{SN}$ perfectly fits the system net component and  for $ i = \overline{1,n} $ $\sigma \upharpoonright_{r_{i}}$ perfectly fits $ (r_{i},m_{i}) $, then $ \sigma $ perfectly fits $ NP $ and this achieves the proof. 
\end{proof}

\section{Conclusions and Future Work}

In this paper, we proposed a compositional approach for conformance checking of nested Petri nets and event logs of multi-agent systems. Nested Petri nets are a well-known Petri net extension where tokens can be Petri nets themselves, allowing to model multi-agent systems. An event log can be projected onto NP-net components (system net and all agents), so conformance checking can be performed between each projection and the corresponding component. This approach can provide specific conformance diagnostics for each system component. We demonstrated the validity of our approach proving that, an event log perfectly fits a nested Petri net if and only if it is syntactically correct w.r.t the model and each projection perfectly fits the corresponding model component. For future research, we consider the experimental evaluation of our approach against other approaches when checking conformance of multi-agent systems.

\vspace{-4pt}
\bibliography{references}

\end{document}